\newtheorem{theorem}{Theorem}%
\newtheorem{definition}{Definition}%
\newtheorem{proposition}[theorem]{Proposition}%
\begin{document}

\title[The ergodic hypothesis: a typicality statement]{The ergodic hypothesis: a typicality statement}

\author*[1,2]{\fnm{Paula} \sur{Reichert}}\email{reichert@math.lmu.de}

\affil*[1]{\orgdiv{Mathematisches Institut}, \orgname{LMU München}, \orgaddress{\country{Germany}}}

\affil[2]{\orgdiv{Department for Humanities \& Arts}, \orgname{Technion, Israel Institute of Technology}, \orgaddress{\country{Israel}}}

\abstract{This paper analyzes the ergodic hypothesis in the context of Boltzmann's late work in statistical mechanics, where Boltzmann lays the foundations for what is today known as the typicality account. I argue that, based on the concepts of stationarity (of the measure) and typicality (of the equilibrium state), the ergodic hypothesis, as an idealization, is a consequence rather than an assumption of Boltzmann's approach. More precisely, it can be shown that every system with a stationary measure and an equilibrium state (be it a state of overwhelming phase space or time average) behaves essentially as if it were ergodic.  
I claim that Boltzmann was aware of this fact as it grounds both his notion of equilibrium, relating it to the thermodynamic notion of equilibrium, and his estimate of the fluctuation rates.
}

\keywords{Ergodic Hypothesis, Boltzmann Equilibrium, (Essential) Ergodicity, Typicality, Thermodynamic Equilibrium}

\maketitle

\section{Introduction}\label{sec1}

The ergodic hypothesis has been formulated by \cite{Boltzmann1871} and \cite{Maxwell} and has famously been discussed by \cite{Ehrenfest1911} in their influential encyclopedia article on statistical mechanics, where they provide an overview of and comment on Boltzmann's work in statistical physics. 

Ever since, the ergodic hypothesis has been debated controversially. This refers not only to the status of the ergodic hypothesis within Boltzmann's work (see, e.g., \cite{Brush1967}), but more generally to its applicability with respect to realistic systems (see, e.g., \cite{Earman}; \cite{Smale}) and its relevance for physics as such (see, e.g., \cite{Sklar}; \cite{Schwartz}; \cite{Bricmont}). 

Despite its debatable status, the concept of ergodicity has attracted a lot of attention. Today there even exists a proper branch of mathematics, so-called ergodic theory, with a plentitude of rigorous mathematical results (most notably, the results of \cite{Birkhoff}, \cite{vonNeumann}, and \cite{Khinchin}; see \cite{Petersen} for an overview).

Interestingly enough, though, Boltzmann himself never highlighted the ergodic hypothesis. Although he introduces it in his early work, he mentions it not even once in his two volumes on gas theory, which constitute his \textit{opus magnum} on statistical mechanics (cf. \cite{Boltzmann1896a}). Still, he seems to rely on ergodicity, at least as an idealization, also in his later work like, for instance, when he estimates the rate of fluctuations in the letter to Zermelo (cf. \cite{Boltzmann1896b}). 

This said, has ergodicity been a fundamental assumption of Boltzmann as the Ehrenfests suggest? If so, why didn't he make this more explicit? This seems the more surprising as he does emphasize the explanatory value of other concepts. For instance, he stresses the fact that equilibrium is a typical state, i.e., a state which  is realized by an overwhelming number of micro configurations, at several points throughout his work (see, e.g., \cite{Boltzmann1896a, Boltzmann1896b, Boltzmann1897}). 

In this paper, I argue that ergodicity, as an idealization, or essential ergodicity, in the strict sense (as defined in section 3.3 below), is a consequence rather than an assumption of Boltzmann's approach. Based on this, I claim that the ergodic hypothesis should be read as a typicality statement, in a way analogous to how Boltzmann taught us to read the H-theorem (see \cite{Boltzmann1896b, Boltzmann1897}). That is, just as a dynamical system of many particles doesn't approach equilibrium for {all}, but for {typical} initial conditions (given a low-entropy initial macrostate) and stays there not for {all}, but for {most} times, in the case of ergodicity, not {all}, but {typical} systems behave not {strictly}, but {essentially}, that is {qualitatively}, as if they were ergodic. 

To make this point precise, what can be shown is the following: \textit{On typical trajectories, the time and phase space averages of physical macrostates  coincide in good appoximation}. This property of the dynamics, which I call {`essential ergodicity'}, follows from the stationarity of the measure and the typicality of the equilibrium state {alone}.

\section{The ergodic hypothesis}

To discuss the ergodic hypothesis, we need to 
introduce the realm of Boltzmann's statistical mechanics: the theory of measure-preserving dynamical systems.

\subsection{Measure-preserving dynamical systems}

Let ($\Gamma, \mathcal{B}(\Gamma), T, \mu$) denote a Hamiltonian system. For $N$ particles, $\Gamma\cong \mathbb{R}^{6N}$ is called phase space. It is the space of all possible microstates $X$ of the system, where a point $X=(q,p)$ in $\Gamma$ represents the positions and momenta of all the particles: $(q,p)=({q}_1, ..., {q}_{3N}, {p}_1, ..., {p}_{3N})$. 

The Hamiltonian flow $T$ is a one-parameter flow $T^t(q,p)=(q, p)(t)$ on $\Gamma$ with $t$ representing time. It is connected to the Hamiltonian vector field $v_H$ as follows: $ v_H(T^t(q,p)) = {dT^t(q,p)}/{dt}.$ In other words, the flow lines are the integral curves along the Hamiltonian vector field, where the latter is specified by $v_H = (\partial H/\partial p, -\partial H/\partial q)$. This is the {physical} vector field of the system, generated by the Hamiltonian $H$, and the flow lines represent the possible trajectories of the system. Finally, $\mu$ refers to the Liouville measure,
\begin{equation} d\mu =\prod_{i=1}^{3N} dq_i dp_i,\end{equation} 
or to any other stationary measure derived thereof. 

Note that we call a measure $\mu$ \textit{stationary} (with respect to $T$) if and only if the flow $T$ is measure-preserving (with respect to $\mu$). Given a Hamiltonian system, it follows from {Liouville's theorem} that the Liouville measure is conserved under the Hamiltonian phase flow. That is, for every $A\in \mathcal{B}(\Gamma)$,
\begin{equation} \mu(T^{-t}A)=\mu(A).\end{equation} 
Since the Liouville measure is just the $6N$-dimensional Lebesgue measure, this says that phase space volume is conserved under time evolution.

If we introduce the notion of the time-evolved measure, $\mu_t (A) := \mu(T^{-t}A)$, we can reformulate the condition of stationarity as follows. A measure $\mu$ is \textit{stationary} if and only if, for every $A\in \mathcal{B}(\Gamma)$,
\begin{equation}\mu_t(A) = \mu(A).\end{equation}
According to this equation, the measure itself is invariant under time translation, which is the main reason for physicists to accept it as \textit{the} measure grounding a statistical analysis in physics (see, e.g., \cite{Ehrenfest1911}, \cite{Gibbons}, \cite{Duerretal}). 
In practice, we are not concerned with the Liouville measure \textit{per se}, but with appropriate stationary measures derived thereof.\footnote{Consider, for instance, an isolated system. Within that system, total energy $E$ is conserved. Hence, trajectories are restricted to the constant-energy hypersurface $\Gamma_E = \{(q,p)\in \Gamma| H(q,p)=E\}$, from which it follows that the microcanonical measure
 \[d\mu_E = \prod_{i=1}^{3N} dq_i dp_i\ \delta(H(q,p)-E)\]
is the appropriate stationary measure of the dynamics in that case.} 
 
\subsection{Variants of the ergodic hypothesis}

Within the framework of Hamiltonian systems or, more generally, measure-preserving dynamical systems, we can analyze Boltzmann's ergodic hypothesis. 

Let again $(\Gamma, \mathcal{B}(\Gamma), T, \mu)$ be a measure-preserving dynamical system and $A\in \mathcal{B}(\Gamma)$. Let, in what follows, $\mu(\Gamma)=1$.\footnote{Throughout this paper, we deal with systems where $\Gamma$ is finite and, hence, $\mu$ is normalizable. In that case, we can set $\mu(\Gamma)=1$ without loss of generality. The hard case of infinite phase spaces has to be discussed elsewhere (see \cite{Goldstein2016} and \cite{Lazarovici2020} for a first discussion).} We call 
\begin{equation} \mu(A) = \int_\Gamma \chi_A(x) d\mu(x)\end{equation}
the {`phase space average'} of $A$ with $\chi_A$ being the characteristic function which is 1 if $x\in A$ and 0 otherwise. Further we call
\begin{equation}\hat{A}(x)=  \lim_{\mathcal{T}\to \infty} \frac{1}{\mathcal{T}} \int_{0}^{\mathcal{T}} \chi_{A}(T^tx)dt\end{equation} 
the {`time average'} of $A$ for some $x\in\Gamma$. Here it has been proven by \cite{Birkhoff} that the infinite-time limit exists pointwise almost everywhere on $\Gamma$ and the limit function $\hat{A}(x)$ is integrable. 

A dynamical system is called \textit{ergodic} if and only if, for all $A\in \mathcal{B}(\Gamma)$ and almost all $x\in\Gamma$ (i.e. for all $x$ except a measure-zero set), the time and phase averages coincide:
\begin{equation} \mu(A) =\hat{A}(x).\end{equation}

\noindent
In other words, a system is called ergodic if and only if, for almost all solutions, the fraction of time the system spends in a certain region in phase space (in the limit $t\to \infty$!) is precisely \textit{equal} to the phase space average of that region.

Historically, the ergodic hypothesis has been formulated differently. In its original version due to \cite{Boltzmann1871} (cited by \cite{Ehrenfest1911}), it refers to the assertion that a trajectory literally has to \textit{go through every point} in phase space (more precisely, in the constant-energy hypersurface). But this would imply that there is only \textit{one} solution with all possible microstates belonging to one and the same solution. This has been proven impossible by \cite{Rosenthal} and \cite{Plancherel}. 

In a weaker formulation, the so-called `quasi-ergodic hypothesis' demands that a trajectory has to \textit{come arbitrarily close to every point} in phase space (see \cite{Ehrenfest1911}). 
Later, the results of \cite{Birkhoff} and \cite{vonNeumann} established the precise conditions under which equality of the time and phase space average is obtained.\footnote{\cite{Birkhoff} gives a definition of ergodicity in terms of invariant sets (where a set $A\in \mathcal{B}(\Gamma)$ is called invariant if and only if $T^{-1}A=A$). If, for all sets $A\in \mathcal{B}(\Gamma)$ with $T^{-1}A=A$,
\[ \mu(A)=0 \hspace{1cm} \mathrm{or} \hspace{1cm}\mu(A) =1,\]
then the system is called `ergodic'.
Thus a system is called `ergodic' if and only if all invariant sets are of full or zero measure. In other words, there exist no two (or more) disjoint invariant sets of non-zero measure. The two definitions of ergodicity relate to one another via Birkhoff's theorem.} 

For realistic physical systems, this equality of the time and phase space average -- that is, ergodicity -- turned out to be extremely hard to prove, if it could be proven at all. To draw on the most important result: it took almost 50 years and joined efforts to extend Sinai's proof for the model of 1 billiard ball on a 2-dimensional table (cf. \cite{Sinai}) to the generalized model of $N\ge 2$ hard spheres in a container with periodic boundary conditions (i.e. a torus) of dimension $d\ge 2$; see \cite{Simanyi}. 

At this point, the question arises: What if we were not interested in the exact coincidence of the time and phase space average in the first place? What if all we need is an approximate equality of the time and phase space average on typical trajectories? The point I want to make is the following: {Boltzmann, being concerned with the analysis of realistic physical systems, need not be and presumably was not interested in ergodicity in the strict sense}. According to \cite{Ehrenfest1911}, Boltzmann used ergodicity to estimate the fraction of time a system spends in a certain macrostate. To obtain such an estimate, however, it suffices to establish a result qualitatively comparable to ergodicity: \textit{an {almost equality} of the time and phase space average of physical macrostates on {typical} trajectories.} This is precisely where the notion of essential ergodicity comes into play.

\section{Essential ergodicity}

We need one last ingredient to grasp the notion of essential ergodicity and that is the notion of typicality of macro- and microstates. We will then find that, given a stationary measure and a typical macrostate, that is, an equilibrium state in Boltzmann's sense, a typical system behaves essentially as if it were ergodic. 

\subsection{Typicality and Boltzmann's notion of equilibrium}

Given a measure on the space of possible states of the system -- like a volume measure on phase space -- this is naturally a measure of probability or typicality.\footnote{There is a little caveat to this statement. While it is definitely true whenever phase space is finite and the measure is normalizable, one has to be careful with infinite phase spaces and non-normalizable measures. For problems related to the latter, see \cite{Schiffrin} or \cite{Goldstein2016}. The distinction between the notions of probability and typicality has been drawn and discussed elsewhere (see, e.g., \cite{Goldstein2012}, \cite{Lazarovici2015} or \cite{Wilhelm}).} 
Let again $\mu$ denote the volume measure on $\Gamma$. We call a measurable set $A\subset \Gamma$ `typical'  (with respect to $\Gamma$) if and only if
\begin{equation} \mu(A) = 1 -\varepsilon \end{equation}
for $0<\varepsilon <<1$. This definition of `typical sets' directly entails a definition of `typical points' (cf. \cite{Wilhelm}).
We say that a point $x$ is `typical' (with respect to $\Gamma$) if and only if $x\in A$ and $A$ is typical with respect to $\Gamma$.

In Boltzmann's statistical mechanics, we are concerned with `points' (microstates) and `sets' (macro-regions). Macro-regions are regions of phase space corresponding to physical macrostates of the system. More precisely, every microstate $X$, represented by a point $({q},{p})$ on $\Gamma$, belongs to respectively determines a certain macrostate $M(X)$, represented by an entire region $\Gamma_M\subset \Gamma$ -- the set of all microstates realizing that particular macrostate. 
While a microstate comprises the exact positions and velocities of all the particles, $X=({q}_1, ..., {q}_N, {p}_1, ..., {p}_N$), a macrostate $M(X)$ is specified by the macroscopic, thermodynamic variables of the system, like volume $V$, temperature $T$, and so on. By definition, any two macrostates $M_i$ and $M_j$ are macroscopically distinct, hence there are only finitely many macrostates $M_i$, and all macrostates together provide a partition of phase space into disjoint `macro regions' $\Gamma_{M_i}$  with $\Gamma =\bigcup_{i=1}^n \Gamma_{M_i}$. Here it is a consequence of the large number of particles that every macrostate $M(X)$ is realized by a huge number of microstates $X$ and, hence, the precise way of partitioning doesn't matter.

In this set-up, Boltzmann defined `equilibrium' precisely as the \textit{typical} macrostate of the system. 
\vspace{0.3cm}

\begin{definition}[Boltzmann equilibrium] Let $(\Gamma, \mathcal{B}(\Gamma), T, \mu)$ be a dynamical system. Let $\Gamma$ be partitioned into finitely many disjoint, measurable subsets $\Gamma_{M_i}, i=1, .., n$ by some (set of) physical macrovariable(s) $M_i$, i.e., $\Gamma =\bigcup_{i=1}^n \Gamma_{M_i}$.  Then a set $\Gamma_{Eq}\in\{\Gamma_{M_1}, ..., \Gamma_{M_n}\}$ with phase space average 
\begin{equation}\label{be}\mu(\Gamma_{Eq})=1-\varepsilon\end{equation} 
where $\varepsilon\in\mathbb{R}$, $ 0<\varepsilon<<1$, is called the `equilibrium set' or {`equilibrium region'}. The corresponding macrostate $M_{Eq}$ is called the {`Boltzmann equilibrium'} of the system.
\end{definition}
\vspace{0.3cm}

Be aware that this definition is grounded on a particular, physical macro partition of phase space. In other words, it is not an arbitrary value of $\varepsilon$ which, when given, determines an equilibrium state -- such a definition would be meaningless from the point of physics. Instead, it is a partition determined by the physical macrovariables of the theory, which is given, and it is with respect to that partition that a region of overwhelming phase space measure, if it exists, defines an equilibrium state in Boltzmann's sense (and by the way determines the value of $\varepsilon$). 

At this point, it has been Boltzmann's crucial insight that, for a realistic physical system of $N \approx 10^{24}$ particles (where, for a medium-sized object, we take Avogadro's constant) and a partition into {macroscopically distinct} states, there always exists a region of overwhelming phase space measure (see, e.g., \cite{Boltzmann1896b}).\footnote{\cite{Lanford} proves the existence of a region of overwhelming phase space measure for a large class of realistic physical systems.} This follows essentially from the vast gap between micro and macro description of the system and the fact that, for a large number of particles, small differences at the macroscopic level translate into huge differences in the corresponding phase space volumes. 

To obtain an idea of the numbers, consider a gas in a medium-sized box. For that model, \cite{Penrose1989, Penrose2004} estimates the volume of all non-equilibrium regions together as compared to the equilibrium region to be:
\begin{equation}  \frac{\mu(\bigcup_{i=1}^n \Gamma_{M_i}\setminus \Gamma_{{Eq}})}{\mu(\Gamma_{{Eq}})}= \frac{\mu(\Omega\setminus \Gamma_{{Eq}})}{\mu(\Gamma_{{Eq}})}\approx 1: 10^N\end{equation}
with $N\approx 10^{24}$. This implies, with $\mu(\Gamma_{Eq})\approx \mu(\Omega)$, that $\varepsilon$ is of the order $1:10^N\approx 1:10^{10^{24}}=\frac{1}{10^{1000000000000000000000000}}.$


Both Boltzmann's realization that equilibrium is a typical state and his understanding that any two distinct macrostates relate to macro-regions that differ vastly in size provided the grounds for his explanation of irreversible behaviour (cf. \cite{Boltzmann1896a, Boltzmann1896b}; see \cite{Lebowitz1993a, Lebowitz1993b}, \cite{Bricmont}, \cite{Goldstein2001}) for further elaboration of this point). In the following, however, we are only concerned with ergodicity and, related to that, a system's long-time behaviour. 

\subsection{Precise bounds on the time and phase space average of the equilibrium state}

In what follows, we give precise bounds on the time average of the equilibrium state. Therefore, consider a dynamical system with a stationary measure $\mu$ and an equilibrium state $\Gamma_{eq}$ in the sense of Boltzmann. That is, $\mu(\Gamma_{Eq})=1-\varepsilon$. 

To be able to formulate the bound on the time average and, later, the notion of `essential ergodicity', we have to distinguish between a `good' set G and a `bad' set B of points $x \in \Gamma$. Let, in what follows, $B$ be the `bad' set of points for which the time average of equilibrium $\hat{\Gamma}_{Eq}(x)$ is smaller than $1-k\varepsilon$ (with $1\le k\le1/\varepsilon$). All points in this set determine trajectories which spend a fraction of less than $1-k\varepsilon$ of their time in equilibrium. Let further $G$ be the `good' set of points with a time average $\hat{\Gamma}_{Eq}(x)$ of at least $1-k\varepsilon$. All points in this set determine trajectories that spend a fraction of at least $1-k\varepsilon$ of their time in equilibrium. To be precise,
\begin{equation}B:=\{x\in\Gamma|\hat{\Gamma}_{Eq}(x)<1-k\varepsilon\}, \hspace{0.5cm}G:=\{x\in\Gamma|\hat{\Gamma}_{Eq}(x)\ge1-k\varepsilon\}.\end{equation} 

While, for a realistic physical system, ergodicity is hard to prove -- if it can be proven at all --, essential ergodicity is not. In fact, it follows almost directly from the stationarity of the measure and the typicality of the equilibrium state.
To be precise, with respect to the two sets $B$ and $G$ the following can been shown. For all $\varepsilon, k\in\mathbb{R}$ with $0<\varepsilon<<1$ and $1\le k\le 1/\varepsilon$: 
\begin{equation}\label{1}\mu(B) <1/k, \hspace{1cm} \mu(G)> 1- 1/k.\end{equation}
The proof can be found in the appendix (see also \cite{Reichert}). An essential ingredient entering the proof is the pointwise existence and integrability of the time average (cf. \cite{Birkhoff}). Hence, in the case of non-ergodic systems, the time average of equilibrium need not attain a fix value on (almost all of) $\Gamma$ -- in fact, it may have different values on different trajectories --, but still it \textit{exists} (pointwise almost everywhere) and this suffices to estimate the size of the set of trajectories with a time average smaller (or larger) than a particular value. 

To grasp the full meaning of Eq.~\ref{1}, consider a physically relevant value of $k$. Recall that, for a medium-sized macroscopic object, $\varepsilon$ is tiny: $\varepsilon\approx 10^{-N}$ with $N\approx 10^{24}$. In that case, one can choose $k$ within the given bounds ($1\le k\le 1/\varepsilon$) large enough for $\mu(B)$ to be close to zero and $\mu(G)$ to be close to one. Consider, for example,
\begin{equation} k=1/\sqrt{\varepsilon}.\end{equation}
In that case, we distinguish between the `good' set $G$ of trajectories which spend at least $1-\sqrt{\varepsilon}$ of their time in equilibrium and the `bad' set $B$ of trajectories which spend less than $1-\sqrt{\varepsilon}$ of their time in equilibrium. And we obtain:
\begin{equation} \mu(B)<\sqrt{\varepsilon}, \hspace{1cm} \mu(G)>1-\sqrt{\varepsilon}.\end{equation}
Given the value of $\varepsilon$ from above, $\varepsilon\approx 10^{-10^{24}}$, it follows that $\sqrt{\varepsilon}\approx 10^{-10^{23}}$. Consequently, the equilibrium region is of measure $\mu(\Gamma_{Eq})\approx1-10^{-10^{24}}$ and the measures of the sets $B$ and $G$ are 
\begin{equation}\label{est} \mu(B)<10^{-10^{23}}, \hspace{1cm} \mu(G)>1-10^{-10^{23}}.\end{equation}
Note that $B$ is now the set of trajectories which spend less than $1-10^{-10^{23}}$ and $G$ the set of trajectories which spend at least $1-10^{-10^{23}}$ (!) of their time in equilibrium. We thus find that trajectories which spend {almost all of their time in equilibrium are {typical} whereas trajectories which spend less than almost all of their time in equilibrium are {atypical}}!

The converse statement has be proven as well (\cite{Frigg2015a, Frigg2015b}; see the appendix for a different proof; cf. \cite{Reichert}). It says that if there exists a region $\Gamma_{Eq'}\subset \Gamma$ in which by far most trajectories spend by far most of their time, then this region has very large phase space measure. To be precise, if there exists a region $G'$ with $\mu(G')=1-\delta$ such that $\forall x\in G'$: $\hat{\Gamma}_{Eq'}(x)\ge 1-\varepsilon'$, then the following holds:

\begin{equation}\label{2}\mu(\Gamma_{Eq'})\ge (1-\varepsilon')(1-\delta).\end{equation} 
Here we are again interested in those cases where $\delta$ and $\varepsilon'$ are very small, $0<\delta<<1$ and $0<\varepsilon'<<1$ (while the result holds for other values of $\delta$ and $\varepsilon'$ as well). 

This converse result tells us that, if there exists a state in which a typical trajectory spends by far most of its time, then this state is of overwhelming phase space measure. 

Why is this converse statement interesting? It doesn't start from Boltzmann's notion of equilibrium. Instead, it starts from a thermodynamic or thermodynamic-like notion of equilibrium. 

According to a standard thermodynamics textbook (like, e.g., \cite{Callen} or \cite{Reiss}), a thermodynamic equilibrium is a state in which a system, once it is in that state, stays for all times. In what follows, we give a definition which relaxes that standard definition a little bit in that it allows for rare fluctuations out of equilibrium and for some atypical trajectories (all $x \notin G'$) that don't behave thermodynamic-like.\footnote{\cite{Lavis2005, Lavis2008} would call this a `thermodynamic-like equilibrium' to draw the distinction between this notion and the standard textbook definition.}
\vspace{0.3cm}

\begin{definition}[Thermodynamic equilibrium] 
 Let $(\Gamma, \mathcal{B}(\Gamma), T, \mu)$ be a dynamical system. Let $\Gamma$ be partitioned into finitely many disjoint, measurable subsets $\Gamma_{M_i} (i=1,...,n)$ by some (set of) physical macrovariable(s) $M_i$, i.e., $\Gamma =\bigcup_{i=1}^n \Gamma_{M_i}$. Let $G'\subset \Gamma$ with $\mu(G')=1-\delta$ and $0<\delta<<1$. Let $0<\varepsilon' << 1$. A set $\Gamma_{Eq'}\in\{\Gamma_{M_1}, ..., \Gamma_{M_n}\}$ (connected to a macrostate $M_{Eq'}$) with time average 
\begin{equation}\label{te}\hat{\Gamma}_{Eq'}(x) \ge 1-\varepsilon'\end{equation}
for all $x\in G'$ is called a `thermodynamic equilibrium'.
\end{definition}
\vspace{0.3cm}


To summarize, we obtain that, for every dynamical system with a stationary measure and a state of overwhelming phase space measure, almost all trajectories spend almost all of their time in that state, and the other way round, given a state in which almost all trajectories spend almost all of their time, that state is of overwhelming phase space measure.
\textit{Hence, an equilibrium state in Boltzmann's sense is a thermodynamic equilibrium and the other way round!}\footnote{Based on the apparently missing connection between the time and the phase space average of equilibrium, Frigg and Werndl assert that Boltzmann's account of thermodynamic behaviour, which has later become known as the `typicality account', is simply `mysterious' \cite[p. 918]{Frigg2012b}. In follow-up papers (cf. \cite{Frigg2015a, Frigg2015b}) they even claim that the typicality account doesn't relate to thermodynamics \textit{at all} because it doesn't draw the connection between Boltzmann's definition of equilibrium (in terms of the phase space average) and the thermodynamic definition of equilibrium (in terms of the time average). 
Here essential ergodicity counters the critique and closes the explanatory gap as it connects the time and phase space averages of the equilibrium state in a mathematically precise way.}

The only two assumptions which enter the proofs of the above assertions are:
\begin{itemize}
\item[a)] that the measure is stationary (resp. the dynamics is measure-preserving), i.e., $\mu_t(A)=\mu(A)$ for all $A\in \mathcal{B}(\Gamma)$ and 
\vspace{0.2cm}

\item[b)] that there is a macrostate of overwhelming phase space measure, i.e., a Boltzmann equilibrium $\Gamma_{Eq}$  with $\mu(\Gamma_{Eq})=1-\varepsilon$, \end{itemize}
or, for the reverse direction, a) and
\begin{itemize}
\item[c)] that there is a state in which typical trajectories spend by far most of their time, i.e., a thermodynamic equilibrium $\Gamma_{Eq'}$ with $\hat{\Gamma}_{Eq'}\ge 1-\varepsilon'$.
\end{itemize}
Ergodicity doesn't enter the proofs, nor do we get ergodicity out of it. However, we get something similar to ergodicity, 
what we call `{essential ergodicity}'. 

\subsection{Essential ergodicity}

While, for an ergodic system, the time and phase space averages \textit{exactly coincide for {all} but a measure-zero set of solutions}, for an essentially ergodic system, the time and phase space averages \textit{almost coincide on {typical} solutions}. To be precise, the following definition applies.
\vspace{0.3cm}

\begin{definition}[Essential ergodicity] Let  $(\Gamma, \mathcal{B}(\Gamma), T, \mu)$ be a dynamical system. Let $\Gamma$ be partitioned into finitely many disjoint, measurable subsets $\Gamma_{M_i} (i=1,...,n)$ by some (set of) physical macrovariable(s) $M_i$, i.e., $\Gamma =\bigcup_{i=1}^n \Gamma_{M_i}$. Let $0<\varepsilon<<1$. A system is called `essentially ergodic' if and only if 
\begin{equation}\label{ess}|\hat{\Gamma}_{M_i}(x)-\mu(\Gamma_{M_i})|\le \varepsilon\end{equation}
$\forall i=1, ..., n$ and $\forall x\in G$ with $\mu(G)\ge 1-\delta$, $0< \delta<<1$.
\end{definition}
\vspace{0.3cm}

 For a measure-preserving system with an equilibrium state (be it a Boltzmann or a thermodynamic equilibrium), Equations \ref{ess} follow in a straightforward way from the two definitions of equilibrium given in Eq.~\ref{be}  and Eq.~\ref{te} and the corresponding results on the time and phase space average, Eq.~\ref{est} and Eq.~\ref{2}, respectively.
 More precisely, the following holds.

\vspace{0.3cm}

\begin{theorem}[FAPP ergodic hypothesis] Let  $(\Gamma, \mathcal{B}(\Gamma), T, \mu)$ be a measure-preserving dynamical system. Let there be an equilibrium state $M_{Eq}$ (a Boltzmann or thermodynamic equilibrium) with corresponding equilibrium region $\Gamma_{Eq}\subset \Gamma$. 

Then the system is essentially ergodic. In particular, there exists an $\varepsilon\in \mathbb{R}$ with $0<\varepsilon<<1$ such that
\begin{equation}\label{erg}|\hat{\Gamma}_{Eq}(x)-\mu(\Gamma_{Eq})|\le \varepsilon\end{equation}
$\forall x\in G$ with $\mu(G)\ge 1-\delta$, $0< \delta<<1$.
\end{theorem}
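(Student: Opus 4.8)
\section*{Proof proposal}

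The plan is to deduce the statement directly from the two quantitative results already established --- the bound Eq.~\ref{1} on the time average of a Boltzmann equilibrium and the reverse bound Eq.~\ref{2} for a thermodynamic equilibrium --- and then to close the small remaining gap by an elementary interval estimate that exploits the fact that both the time average of a characteristic function and a measure lie in $[0,1]$. Accordingly I would split the argument into two cases, depending on which of the two notions of equilibrium the hypothesis provides, and at the end combine them to get the statement of Eq.~\ref{ess} for \emph{all} macrostates, not merely for $\Gamma_{Eq}$.

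Case 1: $M_{Eq}$ is a Boltzmann equilibrium, so that $\mu(\Gamma_{Eq}) = 1-\varepsilon_0$ with $0<\varepsilon_0 \ll 1$ by Eq.~\ref{be}. I would apply Eq.~\ref{1} with the choice $k = 1/\sqrt{\varepsilon_0}$, which lies in the admissible range $1 \le k \le 1/\varepsilon_0$ since $\varepsilon_0 < 1$. This produces the set $G = \{x : \hat{\Gamma}_{Eq}(x) \ge 1 - \sqrt{\varepsilon_0}\}$ with $\mu(G) > 1 - \sqrt{\varepsilon_0}$. For $x \in G$ one has $1 - \sqrt{\varepsilon_0} \le \hat{\Gamma}_{Eq}(x) \le 1$, so $\hat{\Gamma}_{Eq}(x) - \mu(\Gamma_{Eq}) = \hat{\Gamma}_{Eq}(x) - (1-\varepsilon_0)$ lies in $[\varepsilon_0 - \sqrt{\varepsilon_0},\, \varepsilon_0]$, hence $|\hat{\Gamma}_{Eq}(x) - \mu(\Gamma_{Eq})| \le \sqrt{\varepsilon_0} - \varepsilon_0 < \sqrt{\varepsilon_0}$. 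Setting $\varepsilon := \sqrt{\varepsilon_0}$ and $\delta := \sqrt{\varepsilon_0}$ (both $\ll 1$) gives Eq.~\ref{erg}.

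Case 2: $M_{Eq}$ is a thermodynamic equilibrium. Then by Eq.~\ref{te} there is a set $G'$ with $\mu(G') = 1-\delta$, $0<\delta\ll 1$, and $\hat{\Gamma}_{Eq}(x) \ge 1-\varepsilon'$ for all $x \in G'$, with $0<\varepsilon'\ll 1$. The reverse bound Eq.~\ref{2} gives $\mu(\Gamma_{Eq}) \ge (1-\varepsilon')(1-\delta) \ge 1 - \varepsilon' - \delta$, while trivially $\mu(\Gamma_{Eq}) \le 1$. Hence, for $x \in G'$, both $\hat{\Gamma}_{Eq}(x)$ and $\mu(\Gamma_{Eq})$ lie in $[1-\varepsilon'-\delta,\,1]$, so $|\hat{\Gamma}_{Eq}(x) - \mu(\Gamma_{Eq})| \le \varepsilon' + \delta$; taking $G := G'$ and $\varepsilon := \varepsilon' + \delta$ again yields Eq.~\ref{erg}. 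To upgrade this to essential ergodicity in the full sense of Eq.~\ref{ess}, I would intersect $G$ with the full-measure set on which all the time averages $\hat{\Gamma}_{M_i}$ exist (Birkhoff), and use that $\sum_i \hat{\Gamma}_{M_i}(x) = 1 = \sum_i \mu(\Gamma_{M_i})$ by additivity over the partition; for every non-equilibrium macrostate $M_i \ne M_{Eq}$ this forces $\hat{\Gamma}_{M_i}(x) \le 1 - \hat{\Gamma}_{Eq}(x)$ and $\mu(\Gamma_{M_i}) \le 1 - \mu(\Gamma_{Eq})$, whence $|\hat{\Gamma}_{M_i}(x) - \mu(\Gamma_{M_i})| \le \varepsilon$ with the same $\varepsilon$.

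The main point is that there is essentially no obstacle once Eq.~\ref{1} and Eq.~\ref{2} are available: those are the substantive lemmas (proved in the appendix from stationarity and the pointwise ergodic theorem), and what remains is bookkeeping. The only things requiring care are verifying that the chosen exponent $k = 1/\sqrt{\varepsilon_0}$ falls inside the permitted interval $[1,1/\varepsilon_0]$, keeping the input parameters ($\varepsilon_0$, $\varepsilon'$, $\delta$) notationally distinct from the output $\varepsilon$ and checking that the latter is still $\ll 1$, and using the two-sided bound $0 \le \hat{\Gamma}_{Eq}(x),\, \mu(\Gamma_{Eq}) \le 1$ so that the absolute value is controlled from both sides rather than only from below.
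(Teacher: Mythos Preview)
Your proof is correct and follows essentially the same route as the paper's: split into the Boltzmann and thermodynamic cases, invoke Eq.~\ref{1} (with $k=1/\sqrt{\varepsilon_0}$) and Eq.~\ref{2} respectively, and finish with an elementary interval estimate using $0\le\hat{\Gamma}_{Eq}(x),\,\mu(\Gamma_{Eq})\le 1$. Your constants are in fact slightly more careful than the paper's in the Boltzmann case (the paper records $\varepsilon=\varepsilon''$, whereas your $\varepsilon=\sqrt{\varepsilon_0}$ is what the two-sided interval estimate actually yields), and you also spell out the passage from Eq.~\ref{erg} to Eq.~\ref{ess} via $\sum_i\hat{\Gamma}_{M_i}=\sum_i\mu(\Gamma_{M_i})=1$, which the paper leaves as ``follows directly''.
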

\vspace{0.3cm}

\begin{proof} We only prove Equation \ref{erg}. From that, Equations \ref{ess} follow directly. 

Let $0<\delta', \varepsilon', \varepsilon''<<1$. For the first direction of proof, consider a thermodynamic equilibrium, i.e., $\hat{\Gamma}_{Eq}(x) \ge 1-\varepsilon'$ for all $x\in G'$ with $\mu(G')=1-\delta'$. It follows from Eq.~\ref{2} that $\mu(\Gamma_{Eq})\ge(1-\varepsilon')(1-\delta')$ and, hence, 
\begin{equation}|\hat{\Gamma}_{Eq}(x)-\mu(\Gamma_{Eq})|\le \varepsilon' + \delta' - \varepsilon' \delta'.\end{equation}
Now set $G=G'$, $\delta= \delta'$ and $\varepsilon=\varepsilon' + \delta' - \varepsilon' \delta'$.

For the other direction, consider a Boltzmann equilibrium, i.e., $\mu({\Gamma}_{Eq}) = 1-\varepsilon''$. It follows from Eq.~\ref{1} that $\mu(G'')>1-\sqrt{\varepsilon''}$
with $G''=\{x\in\Gamma|\hat{\Gamma}_{Eq}(x)\ge1-\sqrt{\varepsilon''}\}$. Hence, for all $x\in G''$,
\begin{equation}|\hat{\Gamma}_{Eq}(x)-\mu(\Gamma_{Eq})|\le \varepsilon''.\end{equation}
Now set $G=G''$, $\delta = \sqrt{\varepsilon''}$ and $\varepsilon=\varepsilon''$.
\end{proof}

Bear in mind that, in this theorem, the order of $\varepsilon$ is the order of the incredibly tiny proportion of phase space that is occupied by the system's non-equilibrium macrostates. This means that for all practical purposes (FAPP) the time and phase space averages can be taken to be equal. In other words, the system behaves essentially as if it were ergodic.     

\subsection{Scope and limits of (essential) ergodicity}

Although the notion of essential ergodicity is weaker than the notion of ergodicity, it predicts qualitatively the same long-time behaviour. In particular, it tells us that a typical trajectory spends by far most of its time in equilibrium, where equilibrium is defined in Boltzmann's way in terms of the phase space average, and it makes this notion of `by far most' mathematically precise.\footnote{Goldstein makes a similar point when he asserts that, even without ergodicity, the value of any thermodynamic variable is constant `to all intents and purposes' \cite[p. 46]{Goldstein2001}.} 
This justifies, in a rigorous way, Boltzmann's assumption of ergodicity as an idealization or FAPP truth in analyzing the system's long-time behaviour (as done, e.g., in his estimate of the fluctuation rate \cite{Boltzmann1896b}). 
In other words, based on Boltzmann's account, the ergodic hypothesis is well-justified. It is a good working hypothesis for those time scales on which it begins to matter that trajectories wind around all of phase space.

Let us, at this point, use the above result on essential ergodicity to estimate the rate of fluctuations out of equilibrium. Recall that, according to Eq.~\ref{est}, typical trajectories spend at least $1-10^{-10^{23}}$ of their time in equilibrium, when equilibrium is of measure $\mu(\Gamma_{Eq})=1-10^{-10^{24}}$ (which is a reasonable value for a medium-sized object). In other words, they spend a fraction of less than $10^{-10^{23}}$ of their time {out of} equilibrium, that is, in a fluctuation. If we assume that fluctuations happen randomly, in accordance with a trajectory wandering around phase space erratically, we obtain the following estimate for typical trajectories: a fluctuation of $1$ second occurs about every $10^{10^{23}}$ seconds. But this means that a typical medium-sized system spends trillions of years in equilibrium as compared to one second in non-equilibrium, a time larger than the age of the universe!\footnote{This agrees with the time estimate Boltzmann presents in his letter to Zermelo \cite[p. 577]{Boltzmann1896b}.}

So far we argued that essential ergodicity substantiates Boltzmanns assertions about the long-time behaviour of macroscopic systems. What about the short-time behaviour? In physics and philosophy, several attempts have been made to use ergodicity in some way or the other to explain a system's evolution from non-equilibrium to equilibrium (see \cite{Vranas} or \cite{Frigg2011, Frigg2012a}; for earlier attempts as well as a thorough critique, see \cite{Bricmont} and the references therein). 

In this paper, I argue that ergodicity -- just like epsilon-ergodicity, essential ergodicity, or any other notion involving an infinite-time limit -- \textit{does not} and \textit{cannot} tell us anything about the approach to equilbrium, which is a behaviour within \textit{short times}. 
This is simply due to the fact that the notion of ergodicity (or any notion akin to that) involves an \textit{infinite-time limit}. Because of that limit, ergodicity can, at best, tell us something about the system's \textit{long-time} behaviour where `long-time' refers to time scales comparable to the recurrence times, where it begins to matter that the system's trajectory winds around all of phase space. For those short time scales on which the system evolves from non-equilibrium to equilibrium, ergodicity (or any notion akin to that) doesn't play any role. In fact, for a realistic gas, the equilibration time scale (i.e. the time scale of a system's approach to equilibrium) is fractions of a second as compared to trillions of years for the recurrence time! 

Boltzmann's explanation of the irreversible approach to equilibrium is a genuine typicality result (see the discussion and references at the end of section 3.1) -- ergodicity doesn't add to nor take anything from that.

At this point, a quote of the mathematician Schwartz fits well.\footnote{This quote was one of the first quotes (and essays) that were given to me by Detlef D\"urr, to whom this memorial volume is dedicated. 
} Schwartz writes with respect to Birkhoff's ergodic theorem and the widely-spread conception that ergodicity might help to explain thermodynamic behaviour \cite[pp. 23--24]{Schwartz}:

 \begin{quote}The intellectual attractiveness of a mathematical argument, as well as the considerable mental labor involved in following it, makes mathematics a powerful tool of intellectual prestidigitation -- a glittering deception in which some are entrapped, and some, alas, entrappers. Thus, for instance, the delicious ingenuity of the Birkhoff ergodic theorem has created the general impression that it must play a central role in the foundations of statistical mechanics. [...] The Birkhoff theorem in fact does us the service of establishing its own inability to be more than a questionably relevant superstructure upon [the] hypothesis [of typicality]. \end{quote}

\section{Conclusion}

Based on typicality and stationarity as the two basic concepts of Boltzmann's approach, it follows that ergodicity, as an idealization, or essential ergodicity, in the strict sense, is a consequence rather than an assumption of Boltzmann's account. 

I believe that Boltzmann was aware of this fact. In my opinion, he simply didn't highlight the precise mathematical connection between the concepts of typicality, stationarity, and essential ergodicity because it was absolutely clear to him that, given a state of overwhelming phase space volume and a stationary measure, by far most trajectories would stay in that state by far most of their time -- just like by far most trajectories starting from non-equilibrium would move into equilibrium very quickly. He didn't need a mathematical theorem to make this more precise.

Let me now end this paper with a variation of the both picturesque and paradigmatic example of Tim Maudlin, about typicality incidents occurring in the Sahara desert.\footnote{Known to the author from private conversation. The original version is 
about a person's approach from non-equilibrium (here: an oasis) to equilibrium (here: the remainder of the desert), where it is the atypical initial condition, the special fact of `being in an oasis' \textit{in the very beginning}, which is in need of explanation. The fact that a person, walking around in an unspecific and maybe even random way, walks out of the oasis into the desert is merely typical (we call it \textit{typical within atypicality}; see \cite{DuerrTeufel} for this phrasing). According to \cite{Goldstein2001}, it is the explanation of the atypical initial condition which constitutes the hard part of any explanation of thermodynamic irreversibility.} In what follows, I will adapt this example to the case of essential ergodicity.

A person wandering through the Sahara is typically surrounded by sand by far most of her time. In other words, she is typically hardly ever in an oasis. This fact is independent of the exact form of her `wandering about', if she changes direction often, or not, if she moves fast, or not, and so on. Even if she doesn't move at all, she is typically surrounded by sand (in that case, for all times). In other words, independent of the dynamics, the long-time average of `being surrounded by sand' is close to one on typical trajectories. This follows solely from the fact that all oases together constitute a vanishing small part of the Sahara desert and remain to do so throughout all times.\\
\newpage

\Large \noindent \textbf{Appendix}
\normalsize
\vspace{0.5cm}

\noindent In what follows, I prove a theorem on the time average of the Boltzmann equilibrium.\\

\begin{theorem}[Time average of $\Gamma_{Eq}$]
Let ($\Gamma, \mathcal{B}(\Gamma), \mu$) be a probability space and let $T$ be a measure-preserving transformation. Let $\varepsilon, k\in\mathbb{R}$ with $0<\varepsilon<<1$ and $1\le k\le 1/\varepsilon$. Let $\Gamma_{Eq}\subset\Gamma$ be an equilibrium region with $\mu(\Gamma_{Eq})=1-\varepsilon$. 
Let $B$ be the set of points for which the time average of equilibrium is smaller than $1-k\varepsilon$, $B=\{x\in\Gamma|\hat{\Gamma}_{Eq}(x)<1-k\varepsilon\}$. 
It follows that $B$ is of measure
\begin{equation} \mu(B) <1/k. \end{equation}
Let further $G=\{x\in\Gamma|\hat{\Gamma}_{Eq}(x)\ge1-k\varepsilon\}$ be the set of points for which the time average of equilibrium is larger than or equal to $1-k\varepsilon$. Then 
\begin{equation} \mu(G)> 1-1/k.\end{equation}
\end{theorem}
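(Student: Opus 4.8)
The plan is to reduce the statement to a one-line Chebyshev/Markov-type estimate for the limit function $\hat{\Gamma}_{Eq}$, whose basic properties are exactly what Birkhoff's theorem supplies. Abbreviate $g(x) := \hat{\Gamma}_{Eq}(x)$. By Birkhoff's ergodic theorem, $g$ is defined for $\mu$-almost every $x$, is measurable, takes values in $[0,1]$, is integrable, and satisfies $\int_\Gamma g\, d\mu = \mu(\Gamma_{Eq}) = 1-\varepsilon$; on the $\mu$-null set where the infinite-time limit fails to exist I set $g := 0$, which changes none of the set measures computed below.

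If one prefers not to quote the integral identity $\int_\Gamma g\, d\mu = \mu(\Gamma_{Eq})$ as part of Birkhoff's theorem, I would derive it first: for each finite $\mathcal{T}$, Fubini's theorem and the measure-preservation of $T$ give $\int_\Gamma \frac{1}{\mathcal{T}}\int_0^{\mathcal{T}} \chi_{\Gamma_{Eq}}(T^t x)\, dt\, d\mu(x) = \frac{1}{\mathcal{T}}\int_0^{\mathcal{T}} \mu(T^{-t}\Gamma_{Eq})\, dt = \mu(\Gamma_{Eq})$, and letting $\mathcal{T}\to\infty$ along a sequence, with dominated convergence (all integrands are bounded by $1$), yields $\int_\Gamma g\, d\mu = \mu(\Gamma_{Eq})$.

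Next, set $h := 1-g$, so that $h\ge 0$ $\mu$-a.e. and $\int_\Gamma h\, d\mu = \varepsilon$, and note that, up to a $\mu$-null set, $B = \{x : h(x) > k\varepsilon\}$. If $\mu(B)=0$ the bound is trivial (here one uses $k\le 1/\varepsilon$ only to know $k<\infty$, hence $1/k>0$). Otherwise $h - k\varepsilon > 0$ on $B$ with $\mu(B)>0$, so $\int_B (h-k\varepsilon)\, d\mu > 0$, and therefore $\varepsilon = \int_\Gamma h\, d\mu \ge \int_B h\, d\mu > k\varepsilon\,\mu(B)$; dividing by $k\varepsilon$ gives $\mu(B) < 1/k$. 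Since $B$ and $G$ partition $\Gamma$ up to the null set where $g$ is undefined, $\mu(G) = 1 - \mu(B) > 1 - 1/k$, which is the second claim.

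The whole argument is elementary once Birkhoff's theorem is in hand; the only genuinely non-trivial input is the a.e.\ existence of the time average together with the integral identity $\int \hat{\Gamma}_{Eq}\, d\mu = \mu(\Gamma_{Eq})$. The one place to be careful is the bookkeeping of the $\mu$-null exceptional set on which $\hat{\Gamma}_{Eq}$ is a priori undefined, and the observation that the desired inequality is strict — which follows simply from the fact that $h - k\varepsilon$ is strictly positive on $B$ and hence has strictly positive integral over $B$ whenever $\mu(B)>0$. No ergodicity of the system is used anywhere.
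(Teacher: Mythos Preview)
Your proof is correct and follows essentially the same route as the paper's: both establish the key identity $\int_\Gamma \hat{\Gamma}_{Eq}\,d\mu = \mu(\Gamma_{Eq}) = 1-\varepsilon$ via measure preservation, Birkhoff's theorem, and dominated convergence, and then bound $\mu(B)$ using only $\hat{\Gamma}_{Eq}\le 1$ on $G$ and $\hat{\Gamma}_{Eq}<1-k\varepsilon$ on $B$. The only difference is packaging --- you recognize the final step as a Markov-type inequality on $h=1-\hat{\Gamma}_{Eq}$, whereas the paper spells out the same estimate via ``mean time averages'' $\bar{\Gamma}_{Eq}(G),\,\bar{\Gamma}_{Eq}(B)$ and solves $1-\varepsilon = \bar{\Gamma}_{Eq}(G)\mu(G)+\bar{\Gamma}_{Eq}(B)\mu(B)$ for $\mu(B)$; the two computations are line-for-line equivalent.
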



\begin{proof}[Proof (Theorem 4.1)]
The transformation $T$ is measure-preserving, that is, for any set $A\in \mathcal{B}(\Gamma)$ and $\forall t$: $\mu(A)=\mu(T^{-t}A)$. Hence, in particular, $\mu(\Gamma_{Eq})=\mu(T^{-t}\Gamma_{Eq})$ where $\Gamma_{eq}$ refers to the equilibrium state, i.e., $\mu(\Gamma_{Eq})=1-\varepsilon$. It follows that $\mu(T^{-t}\Gamma_{Eq})=1-\varepsilon$, as well, and thus:
\begin{eqnarray}\label{e1}1-\varepsilon &=&\mu(\Gamma_{Eq})= \mu(T^{-t}\Gamma_{Eq}) = \int_\Gamma \chi_{T^{-t}\Gamma_{Eq}}(x)d\mu(x)=\int_\Gamma \chi_{\Gamma_{Eq}}(T^tx)d\mu(x)\nonumber\\
&=& \lim_{\mathcal{T}\to \infty} \frac{1}{\mathcal{T}} \int_{0}^{\mathcal{T}} dt \int_\Gamma \chi_{\Gamma_{Eq}}(T^tx)d\mu(x).\end{eqnarray}
Here the last equation follows from the fact that the integrand is a constant.

At this point, we make use of the pointwise ergodic theorem of \cite{Birkhoff}\footnote{For a thorough presentation of Birkhoff's theorem and its proof, see \cite{Petersen}.} which says that, for any measure-preserving transformation $T$ 
and for any $\mu$-integrable function $f$, i.e. $f\in L^1(\mu)$, the limit 
\begin{equation} \hat{f}=\lim_{\mathcal{T}\to \infty} \frac{1}{\mathcal{T}} \int_{0}^{\mathcal{T}} f(T^tx)dt\end{equation}
exists for almost every $x\in\Gamma$ and the (almost everywhere defined) limit function $\hat{f}$ is integrable, i.e., $\hat{f}\in L^1(\mu)$. 

Let us apply Birkhoff's theorem to the above equation. The characteristic function $\chi_{\Gamma_{Eq}}$ is $\mu$-integrable. Hence, for almost all $x\in \Gamma$, $\lim_{\mathcal{T}\to \infty} \frac{1}{\mathcal{T}} \int_{0}^{\mathcal{T}} \chi_{\Gamma_{Eq}}(T^tx)dt$ exists and is $\mu$-integrable. In other words, for almost every single trajectory the time average exists. By dominated convergence, we can thus change the order of integration and pull the limit into the $\mu$-integral. Let $\Gamma^*\subset\Gamma$ denote the set of points for which the time average exists, with $\mu(\Gamma^*)=\mu(\Gamma)$. Then Eq.~\ref{e1} becomes
\begin{equation}\label{e2} 1-\varepsilon=\int_{\Gamma^*} d\mu(x) \bigg[\lim_{\mathcal{T}\to \infty} \frac{1}{\mathcal{T}} \int_{0}^{\mathcal{T}} \chi_{\Gamma_{Eq}}(T^tx)dt\bigg].\end{equation}

Let us analyze the general case.\footnote{It is interesting to demonstrate how this equation is fulfilled in the two `extreme' cases of possible dynamics: first, the ergodic case, which says that the trajectory is dense in phase space. Second, the case of $T^t$ being the identity which implies that every trajectory is merely one point. All other cases lie in between. 

The first way to fulfill Eq.~\ref{e2} is that the time average $\hat{\Gamma}_{Eq}(x)$ is a constant (almost everywhere). In that case, it must hold that $\hat{\Gamma}_{Eq}(x)=1-\varepsilon$. The set of all points $x\in \Gamma$ for which the limit exists (and is constantly $1-\varepsilon$), defines an invariant set, $T^{-1}\Gamma^*=\Gamma^*,$ with measure $\mu(\Gamma^*)=1$. This is the ergodic case. 

The second way to fulfill Eq.~\ref{e2} is that there exists an invariant set $A$ (i.e. $T^{-1}A=A$) with $\mu(A)=\varepsilon$ such that $\forall x\in A$: $ \hat{\Gamma}_{Eq}(x)=0$ and $\forall x \notin A:  \hat{\Gamma}_{Eq}(x)=1$ (again, up to a set of measure zero). Then also $\Gamma^*\backslash A$ is an invariant set and $\mu(\Gamma^*\backslash A)=1-\varepsilon$. This reflects the case of $T^t$ being the identity, $T^tx=x$, and $\Gamma\backslash A=\Gamma_{Eq}$.} Let again 
$G=\{x\in \Gamma| \hat{\Gamma}_{Eq}(x)\ge 1-k\varepsilon\}$
and $B=\{x\in \Gamma| \hat{\Gamma}_{Eq}(x)< 1-k\varepsilon\}.$
It is clear that this defines a decomposition of $\Gamma^*$ into disjoint sets, $\Gamma^*=G\cup B$, with $\mu(B)=\mu(\Gamma\backslash G)= 1-\mu(G)$, and where $G$ and $B$ are invariant sets.  
Hence, Eq.~\ref{e2} can be rewritten as 
\begin{equation}\label{e3} 1-\varepsilon
= \int_{G} \hat{\Gamma}_{Eq}(x) d\mu(x) + \int_{B} \hat{\Gamma}_{Eq}(x) d\mu(x).
\end{equation}
Let now the {`mean time average'} of $G$ be defined as 
\begin{equation}\bar{\Gamma}_{Eq}(G)= \frac{1}{\mu(G)}\int_G \hat{\Gamma}_{Eq}(x)d\mu(x),\end{equation}
where $\hat{\Gamma}_{Eq}(x)$ exists and is integrable for all $x\in G$. The mean time average determines the mean fraction of time the trajectories starting in $G$ spend in the set $\Gamma_{Eq}$. Analogously, let $\bar{\Gamma}_{Eq}(B)$ denote the mean time average of $B$. 
With this definition, Eq.~\ref{e3} can be rewritten as
\begin{equation}\label{e4}1-\varepsilon= \bar{\Gamma}_{Eq}(G)\mu(G) + \bar{\Gamma}_{Eq}(B)\mu(B). \end{equation}
We want to solve this for $\mu(B)$. Recall that $\mu(G)=1-\mu(\Gamma\backslash G)=1-\mu(B)$. 
Moreover, since $\lim_{T\to \infty} \frac{1}{T} \int_{0}^{T} \chi_{\Gamma_{Eq}}(T^tx)dt\le1$, it is $\bar{\Gamma}_{Eq}(G)\le1$. 
On the other hand, it follows from the definition of the mean time average that $\bar{\Gamma}_{Eq}(B)<1-k\varepsilon$ (since $\hat{\Gamma}_{Eq}(x)<1-k\varepsilon$ for all $x\in B$). Hence, since $k\ge 1$, it is $\bar{\Gamma}_{Eq}(B)<1-\varepsilon$.

Now in order for the right hand side of Eq.~\ref{e4} to add up to $1- \varepsilon$, the measure of $B$ needs to be small. This is due to the fact that $\mu(B)$ comes with a factor $\bar{\Gamma}_{Eq}(B)<1-\varepsilon$ which can only be encountered by a factor $\bar{\Gamma}_{Eq}(G)\ge 1-\varepsilon$ in front of $\mu(G)$. However, since $\bar{\Gamma}_{Eq}(G)$ is bounded from above by One, $\bar{\Gamma}_{Eq}(G)\le 1$, the first summand can outweigh the second only if $\mu(G)$ is large enough (respectively, $\mu(B)$ small enough). At most, $\bar{\Gamma}_{Eq}(G)=1$. In that case, $\mu(G)$ attains its minimum and $\mu(B)$ its maximum (where $\mu(B) = 1- \mu(G)$). 
Since we want to determine an upper bound of $\mu(B)$, we set $\bar{\Gamma}_{Eq}(G)=1$ (a condition we will relax later). Let, in addition, $\Theta:=\mu(B)$. Then equation Eq.~\ref{e4} can be rewritten as 
\begin{equation} 1-\varepsilon = (1-\Theta) + \bar{\Gamma}_{Eq}(B)\Theta \end{equation} 
With $\bar{\Gamma}_{Eq}(B)<1 -k\epsilon$, it follows that
\begin{equation} \hspace{0.5cm}\Theta=\frac{\varepsilon}{1-  \bar{\Gamma}_{Eq}(B)} < \frac{\varepsilon}{1- (1-k\varepsilon)} =1/k. \end{equation}
If we now no longer restrict the mean time average of $G$ to be one, this inequality becomes even more pronounced. That way we obtain an upper bound on $\mu(B)$:
\begin{equation}\mu(B)< 1/k.\end{equation} 
From this it follows directly that 
\begin{equation}\mu(G)=\mu(\Gamma\backslash B) = 1-\mu(B))> 1- 1/k.\end{equation}
This proves the assertion.
\end{proof}

\noindent In what follows, I give the proof of the converse statement saying that a state in which typical solutions stay by far most of their time is a state of by far largest phase space volume.\\

\begin{proposition}
Let the setting be as in the above theorem. Let $0<\delta<<1$ and $0<\varepsilon<<1$. Let now $\Gamma_{Eq'}\subset \Gamma$ and $G'\subset \Gamma$ with $\mu(G')=1-\delta$ such that $\forall x \in G'$: $\hat{\Gamma}_{Eq'}(x)\ge1-\varepsilon$. 
Then 
\begin{equation} \mu(\Gamma_{Eq'}) \ge (1-\varepsilon)(1-\delta).\end{equation}
\end{proposition}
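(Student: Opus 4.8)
The plan is to reduce the proposition to the single integral identity $\mu(\Gamma_{Eq'}) = \int_\Gamma \hat{\Gamma}_{Eq'}(x)\,d\mu(x)$ and then to bound the integrand from below using the hypothesis on $G'$. First I would reproduce, with $\Gamma_{Eq'}$ in place of $\Gamma_{Eq}$, the chain of equalities (\ref{e1})--(\ref{e2}) from the proof of the theorem above. Stationarity of $\mu$ gives $\mu(\Gamma_{Eq'}) = \mu(T^{-t}\Gamma_{Eq'}) = \int_\Gamma \chi_{\Gamma_{Eq'}}(T^t x)\,d\mu(x)$ for every $t$; since the left-hand side is independent of $t$, averaging over $t\in[0,\mathcal{T}]$ and letting $\mathcal{T}\to\infty$ does not change it. Birkhoff's theorem \cite{Birkhoff} ensures that $\hat{\Gamma}_{Eq'}(x)=\lim_{\mathcal{T}\to\infty}\frac{1}{\mathcal{T}}\int_0^{\mathcal{T}}\chi_{\Gamma_{Eq'}}(T^t x)\,dt$ exists for all $x$ in a set $\Gamma^*$ of full measure and defines an element of $L^1(\mu)$, and dominated convergence (all integrands are bounded by $1$) lets the limit be pulled inside the $\mu$-integral. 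This yields
\[ \mu(\Gamma_{Eq'}) = \int_{\Gamma^*} \hat{\Gamma}_{Eq'}(x)\,d\mu(x). \]

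Next I would split this integral along the partition $\Gamma^* = (G'\cap\Gamma^*)\cup(\Gamma^*\setminus G')$, noting that $\mu(G'\setminus\Gamma^*)=0$, so the integral over $G'\cap\Gamma^*$ agrees with the one over $G'$. On $G'$ the hypothesis gives $\hat{\Gamma}_{Eq'}(x)\ge 1-\varepsilon$, while everywhere on $\Gamma^*$ the time average is a limit of averages of a non-negative function and hence satisfies $\hat{\Gamma}_{Eq'}(x)\ge 0$. By monotonicity of the integral,
\[ \mu(\Gamma_{Eq'}) = \int_{G'} \hat{\Gamma}_{Eq'}\,d\mu + \int_{\Gamma^*\setminus G'} \hat{\Gamma}_{Eq'}\,d\mu \ge (1-\varepsilon)\,\mu(G') + 0 = (1-\varepsilon)(1-\delta), \]
which is the claimed bound.

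There is no genuinely hard step: once the identity $\mu(\Gamma_{Eq'}) = \int \hat{\Gamma}_{Eq'}\,d\mu$ is in hand --- and its proof is verbatim the first half of the proof of the theorem above --- the estimate is a one-line consequence. The only point that needs a little care is the interchange of limit and integral, handled by the same dominated-convergence argument already used; equivalently one may note $\frac{1}{\mathcal{T}}\int_0^{\mathcal{T}}\mu(T^{-t}\Gamma_{Eq'})\,dt = \mu(\Gamma_{Eq'})$ for every finite $\mathcal{T}$ by Fubini and then pass to the limit by bounded convergence, which is the same reasoning reorganized. I would expect the write-up to be short, consisting mostly of the recollection of the integral identity, and I would add a brief remark that the bound is essentially sharp, the loss relative to full measure being precisely what one concedes by allowing up to a fraction $\varepsilon$ of non-equilibrium time on $G'$ together with an exceptional set $\Gamma\setminus G'$ of measure $\delta$.
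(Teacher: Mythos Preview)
Your proposal is correct and follows essentially the same route as the paper: both establish the identity $\mu(\Gamma_{Eq'})=\int_{\Gamma^*}\hat{\Gamma}_{Eq'}\,d\mu$ by repeating the stationarity/Birkhoff/dominated-convergence argument from the preceding theorem, then split the integral over $G'$ and its complement and use $\hat{\Gamma}_{Eq'}\ge 1-\varepsilon$ on $G'$ and $\hat{\Gamma}_{Eq'}\ge 0$ elsewhere. Your treatment is slightly more careful in noting explicitly that $\mu(G'\setminus\Gamma^*)=0$, but otherwise the arguments coincide.
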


\begin{proof}
When one applies Eq.~\ref{e1}, Eq.~\ref{e2} and Eq.~\ref{e3} to the set $\Gamma_{Eq'}\subset\Gamma$, one gets 
\begin{eqnarray} \mu(\Gamma_{Eq'})&=& \int_{\Gamma^*} d\mu(x) \bigg[\lim_{\mathcal{T}\to \infty} \frac{1}{\mathcal{T}} \int_{0}^{\mathcal{T}} \chi_{\Gamma_{Eq'}}(T^tx)dt\bigg]\nonumber\\
&=&\int_{G'} d\mu(x) \hat{\Gamma}_{Eq'}(x)+ \int_{B'} d\mu(x)\hat{\Gamma}_{Eq'}(x),\end{eqnarray}
where the first equality holds due to Birkhoff's theorem and the second uses the definition of the time average and $B'=\Gamma^*\backslash G'$. From $ \int_{B'} d\mu(x)\hat{\Gamma}_{Eq'}(x)\ge 0$ and the assumptions it follows that 
\begin{equation} \mu(\Gamma_{Eq'})\ge\int_{G'} d\mu(x) \hat{\Gamma}_{Eq'}(x)\ge (1-\varepsilon)(1-\delta). \end{equation}
This proves the assertion.
\end{proof}

\bibliography{sn-bibliography}

\end{document}